\newtheorem{theorem}{Theorem}[section]
\begin{document}

\title[]{The wronskian solution of the constrained discrete KP hierarchy}
\author{Maohua Li$^{1,2,3}$,   Jingsong
He$^1$$^*$} \dedicatory {
1. Department of Mathematics, Ningbo University, Ningbo, 315211 Zhejiang, China\\
2. School of Mathematical Sciences, USTC, Hefei, 230026 Anhui, China\\
3. Institute for Theoretical Physics, KU Leuven, Celestijnenlaan 200D, B-3001 Leuven, Belgium\\
limaohua@nbu.edu.cn\\
hejingsong@nbu.edu.cn
}
\date{}

\thanks{$^*$ Corresponding author}

\begin{abstract}
From the constrained discrete KP (cdKP) hierarchy,  the Ablowitz-Ladik lattice has been derived. By means of the gauge transformation, the Wronskian solution of the Ablowitz-Ladik lattice have been given. The $u_1$ of the cdKP hierarchy  is a Y-type soliton solution for odd times of the gauge transformation, but it becomes a dark-bright soliton solution for even times of the gauge transformation. The role of the discrete variable $n$ in the profile of the $u_1$ is discussed.
\end{abstract}

\maketitle
Keywords: {Constrained discrete KP hierarchy, Gauge transformation, Wronskian solution}

Mathematics Subject Classification (2000):  37K10, 37K40, 35Q51, 35Q55.

\section{Introduction}
In the past few years, lots of attention have been given to the study of
 Kadomtsev-Petviashvili (KP) hierarchy \cite{dkjm,dl1} in the field of integrable systems.
The Lax pairs, Hamiltonian structures, symmetries and conservation
laws, the $N$-soliton, tau function,  the gauge transformation,
reductions etc. of the KP hierarchy and its sub-hierarchies have
been discussed. There are several sub-hierarchies of the KP by considering different
reduction conditions on the Lax operator $L$. One of them is called
constrained KP (cKP) hierarchy \cite{kss,chengyiliyishenpla1991,
chengyiliyishenjmp1995} by setting the  Lax operator as  $L=\partial
+\sum _{i=1}^m \Phi_i \partial ^{-i}\Psi _i$.  The cKP hierarchy contains a large number of interesting
 soliton equations. The basic idea of this procedure is
so-called symmetry constraint \cite{kss,chengyiliyishenpla1991,
chengyiliyishenjmp1995}. The negative part of the Lax operator of the
constrained KP, i.e. $\sum _{i=1}^m \Phi_i \partial ^{-i}\Psi _i$,
is a generator \cite{dl1} of the additional symmetry \cite{os1} of the KP hierarchy.
And the additional symmetry of BKP hierarchy and CKP hierarchy have been given \cite{ tu07lmp,hetian07}.
Very recently, by a further modification of the additional flows,
 the additional symmetries of
 the constrained BKP and constrained CKP hierarchies are given in references
 \cite{tianhe2011,tu_shenJMP2011}.

It is well known that
a continuous integrable system  has a discrete analogue in general.
The famous 3-dimensional difference equation is known to provide a canonical integrable
discretization for most important types of soliton equations.
There are several different kinds of the discrete hierarchies including
differential-difference
 KP (dKP) hierarchy \cite{Kupershimidt,Iliev}, semi-discrete integrable systems, full discrete equations and so on. The differential-difference
 KP  hierarchy, defined by the difference operator
 $\Delta$, is one interesting object  of the discrete integrable systems. Note that,
 the additional symmetry of dKP hierarchy and it's Sato B\"acklund
 transformations  have been given in reference \cite{LiuS2}.
 Moreover, gauge transformation is one kind of powerful method to construct the
solutions of the integrable systems for both the continuous KP
hierarchy \cite{chau_cmp1992,oevel1993,oevelRMP93,wo1,nimmo,cst1,hlc2002,hlc2,hwc}
and the dKP hierarchy \cite{oe,LiuS}.
 It is discussed to reduce the gauge transformation of the dKP hierarchy
 to the constrained discrete KP(cdKP) hierarchy \cite{lmh20131}. And the algebraic structure of the additional symmetry of the cdKP hierarchy also has been found \cite{lmh2}, which is same for the cKP hierarchy \cite{hetian07}.

 A crucial observation \cite{Iliev} about the KP hierarchy and the dKP hierarchy is that
the $\tau$ function of the discrete KP hierarchy  can be constructed by  shift of the $t$ of the $\tau$ function of the continuous KP hierarchy.  It is an interesting question to find any other  difference among the two hierarchies.  In this direction, the correspondence between the solutions of discrete and continuous  hierarchy can be used to explore the difference between them.  In particular, a key step is  to demonstrate how the discrete variable $n$ affects the profile of the solutions of the dKP hierarchy.

  The purpose of this paper is to find the the correspondence between the solutions of the KP hierarchy and the dKP hierarchy by means of the multi-channel  gauge transformation. The paper is organized as follows. Some basic results  of the dKP hierarchy and the cdKP hierarchy are summarized in Section
\ref{section2}. The main theorem about the solution of cdKP hierarchy are give in Section \ref{section3}.  An example is give in section \ref{example}. We find that the odd kinds of gauge transformation of cdKP hierarchy can change to a new profile of solution of the cdKP hierarchy. Section
\ref{conclusion} is devoted to conclusions and discussions.

\section{the cdKP hierarchy} \label{section2}

Let $L$ be a general first-order pseudo difference operator(PDO)
\begin{equation} \label{laxoperatordkp}
L(n)=\Delta + \sum_{i=1}^{\infty} u_i(n)\Delta^{-i},
\end{equation}
 the  cdKP hierarchy \cite{lmh2} is defined by the following Lax equation
\begin{equation}\label{1cdKPlaxeq}
\frac{\partial L}{\partial t_l} = [B_l,L], B_l:=(L^l)_+, l=1,2,\cdots,
\end{equation}
associated with a constrained Lax operator
\begin{equation} \label{laxofcdkp}
 L^l_{-}= \sum_{i=1}^{m}q_i(t)\Delta^{-1}r_i(t),
\end{equation} which is  $m$-components Lax operator of the cdKP hierarchy. It has relation between the dynamical variables $q_i,r_i$ and $u_i$. Specially, $u_1=q_1\Lambda^{-1}(r_1)$, where $\Delta=\Lambda-I$.
The eigenfunction and adjoint eigenfunction {$q_i(t),r_i(t)$} are
important dynamical variables in the cdKP hierarchy.
It can be checked that the Lax equation (\ref{1cdKPlaxeq}) is consistent with the evolution
 equations of the eigenfunction (or adjoint eigenfunction)
\begin{eqnarray}\label{eigenfunction}
\begin{cases}
q_{i,t_m} = B_mq_i,\\
r_{i,t_m} = -B_m^*r_i,\quad  B_m=(L^m)_{+}, \forall m \in N.
\end{cases}
\end{eqnarray}
Therefore  the cdKP hierarchy in eq.(\ref{1cdKPlaxeq}) is well defined.

 From the Lax equation (\ref{1cdKPlaxeq}), we get the first nontrival $t_2$ flow equations of the cdKP hierarchy for $m=1, l=2$ as
\begin{eqnarray}\label{xuedinger}
\begin{cases}
q_{1,t_2}=\Delta^2 q_1+2q_1^2r_1=q_1(n+2)-2q_1(n+1)+q_1(n)+2q_1^2r_1,\\
r_{1,t_2}=-{\Delta^*}^2 r_1+2q_1r_1^2=r_1(n)-2r_1(n-1)+r_1(n-2)+2q_1(n)r_1(n)^2.
\end{cases}
\end{eqnarray}
It is nothing but the Ablowitz-Ladik lattice \cite{ablowitz1975}. It can be reduced to the discrete non-linear Schr\"{o}dinger (DNLS) equation \cite{ablowitz04} by letting $r_1=q_1^*$ and a scaling transformation $t_2=it_2$.

The Lax operator in
eq.(\ref{laxofcdkp}) can be generated by the dressing action
\begin{equation}
L=W \circ \Delta \circ W^{-1},
\end{equation}
with a dressing operator
\begin{equation}
W(n;t)=1+\sum^\infty_{j=1}w_j(n;t)\Delta^{-j}.
\end{equation}
Further the flow equation
(\ref{1cdKPlaxeq}) is equivalent to the so-called Sato equation,
\begin{equation}\label{tkaction}
\partial_{t_l}W=-(L^{l})_-\circ W.
\end{equation}
Denote the exponential function as following
\begin{equation}\label{expnt}
Exp(n;t,z)=(1+z)^nexp(\sum_{i=1}^\infty t_iz^i)=exp(\sum_{i=1}^\infty(t_i+n\frac{(-1)^{i-1}}{i})z^i),
\end{equation}
then
\begin{equation}\label{}
\Delta Exp(n;t,z)=zExp(n;t,z), \Delta^* Exp^{-1}(n;t,z)=zExp^{-1}(n;t,z).
\end{equation}

There are the wave function $w(n;t,z)$ and the adjoint wave function $w^*(n;t,z)$ for the dKP hierarchy as the following forms:
\begin{equation}\label{wavefun}
w(n;t,z)=W(n;t) Exp(n;t,z)=(1+\frac{w_1(n;t)}{z}+\frac{w_2(n;t)}{z^2}+\cdots)exp(\sum_{i=1}^\infty(t_i+n\frac{(-1)^{i-1}}{i})z^i)
\end{equation}
and
\begin{eqnarray}\label{adjointwavefun}
w^*(n;t,z)&=&(W^{-1}(n-1;t))^* Exp^{-1}(n;t,z)\nonumber\\
&=&(1+\frac{w_1^*(n;t)}{z}+\frac{w_2^*(n;t)}{z^2}+\cdots)exp(\sum_{i=1}^\infty-(t_i+n\frac{(-1)^{i-1}}{i})z^i).
\end{eqnarray}

There also exists a $\tau$ function $\tau(n;t)$ for the dKP hierarchy \cite{Iliev} such that the wave function is expressed by
\begin{eqnarray}\label{}
w(n;t,z)= \frac{\tau(n,t-[z^{-1}])}{\tau(n,t)}Exp(n;t,z),
\end{eqnarray}
and the adjoint wave function is expressed by
\begin{eqnarray}\label{}
w^{*}(n;t,z)= \frac{\tau(n,t+[z^{-1}])}{\tau(n,t)}Exp^{-1}(n;t,z),
\end{eqnarray}
where $[z]=(z,z^2/2,x^3/3,\cdots).$

The difference $\Delta-$Wronskian \cite{LiuS}
\begin{equation}\label{wronskian}
\tau_{\Delta}(n)=W_m^{\Delta}(q_1,q_2,\dots,q_m) =
\left|
\begin{array}{cccc}
 q_1 &q_2 &  \cdots  &q_m \\
  \Delta q_1 & \Delta q_2 &  \cdots  & \Delta q_m \\
    \vdots  &  \vdots  &  \ddots  &  \vdots   \\
   \Delta^{m-1} q_1 &\Delta^{m-1}q_2 &  \cdots  & \Delta^{m-1}q_m
   \end{array}
   \right|
\end{equation} is a $\tau$ function of dKP hierarchy. In this section, we will reduce $\tau_{\Delta}(n)$ in (\ref{wronskian}) to a $\tau$ function of the constrained discrete KP hierarchy.

 Now we  consider a chain of gauge transformation operator of multi-channel  difference type $T_d$ \cite{cst1,hlc2,lmh20131} starting from the initial $m$-component  Lax operator $L^{(0)}=L=L_{+}+ \sum_{i=1}^{m}q_i(t)\Delta^{-1}r_i(t)$,
\begin{equation}\label{succesT2}
L^{[0]}\xrightarrow{T_d^{[1]}(q_1^{[0]})} L^{[1]}\xrightarrow{T_d^{[2]}(q_2^{[1]})} L^{[2]}\rightarrow \dots \rightarrow L^{[n-1]}\xrightarrow{T_d^{[n]}(q_n^{[n-1]})} L^{[n]}.
\end{equation}
Here the index $i$ in the gauge transformation operator $T_d^{[i]}(q_j^{[j-1]})$ $(j>i)$means the $i$-th gauge transformation, and $q_j^{[j-1]}$ (or  $r_j^{[j-1]}$) is transformed by $(j-1)$-steps gauge transformations from $q_j$ (or  $r_j$), $L^{[k]}$ is transformed by $k$-steps gauge transformations from the initial Lax operator $L$.

Now we firstly consider successive gauge transformations in (\ref{succesT2}). We define the operator as
\begin{equation}\label{TMK}
T_{m}=T_d^{[m]}(q_m^{[m-1]})\circ \dots \circ T_d^{[2]}(q_2^{[1]})\circ T_d^{[1]}(q_1^{[0]}),
\end{equation}
in which
\begin{eqnarray}
q_i^{[j]}=T_d^{[j]}(q_j^{[j-1]})\circ \dots \circ T_d^{[2]}(q_2^{[1]})\circ T_d^{[1]}(q_1^{[0]}) q_i,i,j=1,\cdots,m;\\
r_k^{[j]}=((T_d^{[j]})^{-1})^{*}(q_j^{[j-1])})\circ \dots \circ ((T_d^{[2]})^{-1})^{*}(q_2^{[1]})\circ ((T_d^{[1]})^{-1})^{*}(q_1^{[0]}) r_k,j,k=1,\cdots,m.
\end{eqnarray}
It means that $q_i^{[0]}=q_i$, $r_i^{[0]}=r_i$.
We shall find another criterion for the Wronskian entries $f_1,f_2, \cdots, f_n$ leading to cdKP flows.
The following theorem can be easily got from the Ref. \cite{lmh20131}.
\begin{theorem}\label{theor1}
The gauge transformation operator $T_m$ and $T_m^{-1}$ have the following determinant representation:
\begin{eqnarray}\label{TM}
T_m&=&T_d^{[m]}(q_m^{[m-1]})\circ \dots \circ T_d^{[2]}(q_2^{[1]})\circ T_d^{[1]}(q_1^{[0]})\nonumber\\
&=&\frac{1}{W_m^{\Delta}(q_1,q_2,\dots,q_m)}
\left|
\begin{array}{ccccc}
 q_1 &q_2 &  \cdots  &q_m &1\\
  \Delta q_1 & \Delta q_2 &  \cdots  & \Delta q_m &\Delta\\
    \vdots  &  \vdots  &  \vdots &  \ddots  &  \vdots   \\
   \Delta^{m-1} q_1 &\Delta^{m-1}q_2 &  \cdots  & \Delta^{m-1}q_m & \Delta^{m-1}\\
   \Delta^{m} q_1 &\Delta^{m}q_2 &  \cdots  & \Delta^{m}q_m & \Delta^{m}
   \end{array}
   \right|,
\end{eqnarray}
and
\begin{eqnarray}
T_m^{-1}&=&\left|
\begin{array}{ccccc}
 q_1\circ \Delta^{-1} &\Lambda( q_1)&\Lambda(\Delta q_1) &  \cdots  &\Lambda(\Delta^{m-2} q_1)\\
  q_2\circ \Delta^{-1} &\Lambda(q_2) &\Lambda(\Delta q_2) &  \cdots  & \Lambda(\Delta^{m-2} q_2) \\
    \vdots  &  \vdots &  \vdots  &  \ddots  &  \vdots   \\
   q_m\circ \Delta^{-1} &\Lambda( q_m)&\Lambda(\Delta q_m) &  \cdots  & \Lambda(\Delta^{m-2} q_m)
   \end{array}
   \right|
   \frac{(-1)^{m-1}}{\Lambda(W_m^{\Delta}(q_1,q_2,\dots,q_m))}\nonumber\\
 &=& \sum_{i=1}^{m}\phi_i\circ \Delta^{-1}b_i
\end{eqnarray}
with
\begin{equation}
b_i=(-1)^{m+i}\Lambda(\frac{W_m^{\Delta}(q_1,q_2,\dots,q_{i-1},\hat{i},q_{i+1},\dots,q_m)}{W_m^{\Delta}(q_1,q_2,\dots,q_{i-1},q_i,q_{i+1},\dots,q_m)}).
\end{equation} Here $\hat{i}$ means that the column containing $q_i$ is delete from $W_m^{\Delta}(q_1,q_2,\dots,q_{i-1},q_i,q_{i+1},\dots,q_m)$ and the last row is also deleted. Here the determinant of $T_m$ is expanded by the last column and collecting all sub-determinants on the left side of the $\Delta^{i}$ with the action $"\circ"$. And $T_m^{-1}$ is expanded by the first column and all the sub-determinants are on the right side with the action  $"\circ"$.
\end{theorem}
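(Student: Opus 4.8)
The plan is to prove the two representations by a straightforward induction on $m$, leaning on the elementary properties of a single difference-type gauge transformation established in Ref.~\cite{lmh20131}. Recall that each one-step factor can be written operationally as $T_d^{[k]}(q_k^{[k-1]})=\Lambda(q_k^{[k-1]})\circ\Delta\circ (q_k^{[k-1]})^{-1}$, so it is a \emph{monic} first-order difference operator that annihilates its generator, $T_d^{[k]}(q_k^{[k-1]})\,q_k^{[k-1]}=0$, while its inverse $T_d^{[k]}(q_k^{[k-1]})^{-1}=q_k^{[k-1]}\circ\Delta^{-1}\circ\Lambda(q_k^{[k-1]})^{-1}$ is a purely integral pseudo-difference operator of the schematic form $q_k^{[k-1]}\circ\Delta^{-1}\circ(\cdot)$. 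First I would record the two structural facts that drive everything: (i) $T_m$ is monic of order $m$, being a composition of $m$ monic first-order factors; and (ii) $T_m$ annihilates each of $q_1,\dots,q_m$. Fact (ii) follows from the telescoping identity $T_m q_j=T_d^{[m]}\circ\dots\circ T_d^{[j+1]}\,(T_j q_j)$ together with $T_j q_j=T_d^{[j]}(q_j^{[j-1]})\,q_j^{[j-1]}=0$.

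For the representation \eqref{TM} of $T_m$ itself, I would invoke the uniqueness of a monic order-$m$ difference operator with a prescribed $m$-dimensional kernel. Expanded along its last column, the bordered determinant in \eqref{TM} is visibly an operator $\sum_{k=0}^m c_k\circ\Delta^k$ whose top coefficient is the cofactor $c_m=W_m^\Delta(q_1,\dots,q_m)/W_m^\Delta(q_1,\dots,q_m)=1$, so it is monic of order $m$. Letting this operator act on $q_j$ replaces the symbolic last column by $(q_j,\Delta q_j,\dots,\Delta^m q_j)^{T}$, producing a determinant with two identical columns, hence it kills each $q_j$. By facts (i)--(ii) and uniqueness it must coincide with $T_m$.

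For $T_m^{-1}$, I would first read off from the cofactor expansion along the first column that $\phi_i=q_i$ and that the cofactor of $q_i\circ\Delta^{-1}$, multiplied by $(-1)^{m-1}/\Lambda(W_m^\Delta)$, is exactly $b_i=(-1)^{m+i}\Lambda\!\big(W_m^\Delta(\dots,\hat i,\dots)/W_m^\Delta(\dots,q_i,\dots)\big)$, the outer shift $\Lambda$ coming from the factors $\Lambda(q)^{-1}$ and the adjoint structure of $\Delta^{-1}$. It then remains to verify $T_m\circ\big(\sum_i q_i\circ\Delta^{-1}b_i\big)=1$. Using the difference Leibniz calculus, $T_m\circ f\circ\Delta^{-1}=(\text{shift of }T_mf)\circ\Delta^{-1}+(\text{difference operator})$, so the integral part of $T_m\circ(q_i\circ\Delta^{-1}b_i)$ carries the factor $T_m q_i=0$ (up to a shift by $\Lambda$) and therefore drops out; what survives is a pure difference operator that must be shown to collapse to the identity.

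The main obstacle is precisely this last collapse: assembling the $m$ difference-operator remainders, weighted by the Wronskian ratios $b_i$, into $1$. This is a discrete Jacobi/Sylvester (Pl\"ucker) identity among the minors of the bordered $\Delta$-Wronskian, complicated here by the bookkeeping of the shift operators $\Lambda$ and the adjoint $\Delta^*=\Lambda^{-1}-I$. I would handle it either by the same induction, writing $T_m^{-1}=T_{m-1}^{-1}\circ T_d^{[m]}(q_m^{[m-1]})^{-1}$ and feeding in the one-step inverse together with the inductive formula for $T_{m-1}^{-1}$, or by checking the requisite quadratic Wronskian identities directly. In both routes the algebraic identities among the $\Delta$-Wronskians are the real content, while the rest is formal operator calculus.
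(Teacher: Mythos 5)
Your argument for the representation of $T_m$ itself is complete and essentially the intended one. (Note that the paper contains no proof of this theorem at all --- it is quoted from Ref.~\cite{lmh20131} --- so the comparison here is with the standard kernel-plus-uniqueness argument of that reference, which is exactly what you use.) Each factor $T_d^{[k]}(q_k^{[k-1]})=\Lambda(q_k^{[k-1]})\circ\Delta\circ(q_k^{[k-1]})^{-1}=\Delta-\frac{\Delta q_k^{[k-1]}}{q_k^{[k-1]}}$ is monic and kills its generator; the telescoping identity gives $T_mq_j=0$; the bordered determinant in \eqref{TM} is monic of order $m$ and kills each $q_j$ by the repeated-column argument; and uniqueness of a monic order-$m$ difference operator with a prescribed $m$-dimensional kernel (which requires $W_m^\Delta\neq 0$) identifies the two. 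That half stands.

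The genuine gap is in the $T_m^{-1}$ half. What you actually establish there is only: (a) the cofactor expansion gives $\phi_i=q_i$ and the stated $b_i$; and (b) by the identity (\ref{operatork}), $(T_m\circ S)_-=\sum_i T_m(q_i)\circ\Delta^{-1}\circ b_i=0$ for $S:=\sum_i q_i\circ\Delta^{-1}\circ b_i$ (exactly, with no extra shift), so $T_m\circ S$ is a pure difference operator. But the statement $T_m\circ S=1$ --- that this difference operator collapses to the identity --- \emph{is} the second formula, and you leave it as an unproved ``discrete Jacobi/Sylvester identity''. Both of your fallback routes lead straight back to the same unproved identities: the induction $T_m^{-1}=T_{m-1}^{-1}\circ(T_d^{[m]})^{-1}$ forces you to compute $\Delta^{-1}\circ f\circ\Delta^{-1}=F\circ\Delta^{-1}-\Delta^{-1}\circ\Lambda(F)$ with $\Delta F=f=b_i^{(m-1)}q_m^{[m-1]}$, and producing the antidifference $F$ in closed form is again a Wronskian identity. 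So, as written, the proposal proves the first formula but not the second. The missing step can in fact be closed by elementary row operations rather than genuine Pl\"ucker relations: expanding $\Delta^{-1}=\Lambda^{-1}+\Lambda^{-2}+\cdots$, the coefficient of $\Lambda^{-s-1}$ in $S$ equals $\frac{1}{\Lambda^{-s}(W_m^\Delta)}\sum_i(-1)^{m+i}q_i\,\Lambda^{-s}(W_{\hat i})$, which is a determinant with rows $\Lambda^{-s}(\Delta^kq_l)$, $k=0,\dots,m-2$, bordered by the row $(q_1,\dots,q_m)$; since $q_l=\sum_{k\le s}\binom{s}{k}\Lambda^{-s}(\Delta^kq_l)$, this vanishes for $s\le m-2$ and equals $\Lambda^{-(m-1)}(W_m^\Delta)$ for $s=m-1$. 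Hence $S$ has order $-m$ with leading coefficient $1$, so the pure difference operator $T_m\circ S$ has order $\le 0$ and constant term $1$, i.e.\ $T_m\circ S=1$; uniqueness of inverses of monic operators in the pseudo-difference algebra then yields $S=T_m^{-1}$. Some argument of this kind (or the explicit quadratic Wronskian identities you allude to) must be supplied before the second half of the theorem can be considered proved.
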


\section{Wronskian solution of constrained discrete KP hierarchy}\label{section3}
Similar to  reference \cite{aratyn1997},
it has \cite{lmh2}
\begin{equation}\label{operatork}
(K \circ q\circ\Delta^{-1}\circ r)_-=K(q)\circ\Delta^{-1}\circ r, (q\circ \Delta^{-1}\circ r\circ K)_-=-q\circ\Delta^{-1}\circ K^*(r),
\end{equation} for a pure-difference operator $K$ and two arbitrary smooth functions ($q, r$).

An important fact is that
there exist two $m$-th order $\Delta$-differential operators
\begin{equation}
A = \Delta^m
 + a_{m-1}\Delta^{m-1}+ \dots+ a_0, B = \Delta^m
 + b_{m-1}\Delta^{m-1}+ \dots+ b_0,
\end{equation}
such that $AL^l$ and $L^lB$ are differential operators. From $(AL^l)_-=0$ and $(L^lB)_-=0$, we get
that $A$ and $B$ annihilate the functions $q^i$ and $r_j$ , i.e., $A(q_1) = \dots = A(q_m) = 0,$ $B^*(r_1) =
\dots = B^*(r_m) = 0$, that implies $q_i \in Ker(A)$, $r_i \in Ker(B^*)$, $i=1,\dots, m$. The dimension of $Ker(A)$ is $m$.

The following theorem provides a criterion for reducing the $\Delta$-Wronskian $\tau$ function in (\ref{wronskian}) of dKP hierarchy to the $\Delta$-Wronskian $\tau$ function   of the cdKP hierarchy defined by (\ref{1cdKPlaxeq}).
\begin{theorem}\label{mainthe}
 The constrained discrete KP hierarchy   has a solution $L=(L)_++\sum_{j=1}^{m}f_j\circ\Delta \circ g_j$ generated by the $\tau$ function $\tau_{\Delta}(n)=W_m^{\Delta}(f_1,\cdots, f_m)\neq0$ satisfies the
$k$-constrained with some suitable functions $q_1,q_2,\cdots,q_M$ and $r_1,r_2,\cdots,r_M$
if and only if
\begin{equation}\label{wronski1}
W_{m+M+1}^{\Delta}(f_1,\cdots,f_m,\Delta^k f_{i_1},\cdots,\Delta^k f_{i_{M+1}})=0
\end{equation}
for all $(M+1)$ indices $1\leq i_1<i_2<\cdots<i_{M+1}\leq m,$ which is equivalent to
{\small \begin{equation}\label{wronski2}
 \mbox{}\hspace{-1.3cm} W_{M+1}^{\Delta}(\frac{W_{m+1}^{\Delta}(f_1,\dots,f_m,\Delta^k f_{i_1})}{W_m^{\Delta}(f_1,\dots,f_m)},\frac{W_{m+1}^{\Delta}(f_1,\dots,f_m,\Delta^k f_{i_2})}{W_m^{\Delta}(f_1,\dots,f_m)},\cdots,\frac{W_{m+1}^{\Delta}(f_1,\dots,f_m,\Delta^k f_{i_{M+1}})}{W_m^{\Delta}(f_1,\dots,f_m)})=0.
\end{equation}}
Here $f_i$ satisfied linear $\Delta-$difference equations
\begin{equation}\label{linearf}
\frac{\partial f_i}{\partial t_k}=\Delta^k f_i,\quad i=1,2,\cdots,m;k=1,2,\cdots.
\end{equation}
\end{theorem}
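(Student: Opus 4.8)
The plan is to pass to the dressing description of the Wronskian solution and reduce the $k$-constraint to a purely linear-algebraic rank condition on an explicit family of functions, which is finally rephrased as the two Wronskian identities. Applying Theorem \ref{theor1} with $q_i$ replaced by $f_i$, the operator $T_m$ built from $f_1,\dots,f_m$ is a monic difference operator of order $m$ with $T_m(f_j)=0$ for every $j$, and the Wronskian $\tau$ function $\tau_\Delta=W_m^\Delta(f_1,\dots,f_m)\neq0$ corresponds to the dressed Lax operator $L=T_m\circ\Delta\circ T_m^{-1}$, where $T_m^{-1}=\sum_{i=1}^m f_i\circ\Delta^{-1}\circ b_i$ with the $b_i$ of Theorem \ref{theor1}. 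The hypothesis \eqref{linearf}, $\partial_{t_j}f_i=\Delta^j f_i$ for all $j$, is exactly what guarantees that $\tau_\Delta$ is a $\tau$ function of the dKP hierarchy, so $L$ automatically solves \eqref{1cdKPlaxeq}; the content of the theorem is the characterization of when $L$ is $k$-constrained, i.e. when $L^k_-=\sum_{j=1}^M q_j\circ\Delta^{-1}\circ r_j$ for suitable $q_j,r_j$.

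Next I would compute $L^k_-$ explicitly. Writing $L^k=T_m\circ\Delta^k\circ T_m^{-1}=\sum_{i=1}^m T_m\circ\Delta^k\circ f_i\circ\Delta^{-1}\circ b_i$ and applying the operator identities \eqref{operatork}, I first replace each $\Delta^k\circ f_i\circ\Delta^{-1}\circ b_i$ by its negative part $\Delta^k(f_i)\circ\Delta^{-1}\circ b_i$ (the discarded piece being a difference operator), and then use the same identity with $K=T_m$; since products of difference operators stay purely differential, this yields
\[
L^k_-=\sum_{i=1}^m T_m\!\left(\Delta^k f_i\right)\circ\Delta^{-1}\circ b_i=\sum_{i=1}^m h_i\circ\Delta^{-1}\circ b_i,\qquad h_i:=\frac{W_{m+1}^\Delta(f_1,\dots,f_m,\Delta^k f_i)}{W_m^\Delta(f_1,\dots,f_m)},
\]
where the formula for $h_i=T_m(\Delta^k f_i)$ comes from the determinant representation \eqref{TM} of $T_m$ acting on a function.

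The core of the argument is then a rank/reduction lemma. Because $T_m$ is invertible and $W_m^\Delta\neq0$, the functions $b_1,\dots,b_m$ are linearly independent over $\mathbb C$; for a representation $\sum_i h_i\circ\Delta^{-1}\circ b_i$ with the $b_i$ independent, the minimal number of summands equals $\dim_{\mathbb C}\operatorname{span}\{h_1,\dots,h_m\}$. Hence $L$ is $k$-constrained with $M$ pairs if and only if $\dim_{\mathbb C}\operatorname{span}\{h_i\}\le M$, which holds if and only if every $(M+1)$-element subset of the $h_i$ is linearly dependent, i.e. $W_{M+1}^\Delta(h_{i_1},\dots,h_{i_{M+1}})=0$ for all $1\le i_1<\dots<i_{M+1}\le m$ by the standard $\Delta$-Wronskian criterion for dependence; this is exactly \eqref{wronski2}. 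The equivalence of \eqref{wronski1} and \eqref{wronski2} then follows from the discrete analogue of the Wronskian-of-Wronskians (Jacobi--Sylvester) identity
\[
W_{m+M+1}^\Delta(f_1,\dots,f_m,\Delta^k f_{i_1},\dots,\Delta^k f_{i_{M+1}})=\bigl(W_m^\Delta(f_1,\dots,f_m)\bigr)^{M}\,W_{M+1}^\Delta(h_{i_1},\dots,h_{i_{M+1}}),
\]
together with $W_m^\Delta\neq0$.

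The step I expect to be the main obstacle is the rank lemma, and within it the ``only if'' direction: showing that an a priori shorter representation $L^k_-=\sum_{j=1}^M q_j\circ\Delta^{-1}\circ r_j$ with $M<\dim\operatorname{span}\{h_i\}$ is impossible. This amounts to proving that the assignment $\sum_i h_i\otimes b_i\mapsto\sum_i h_i\circ\Delta^{-1}\circ b_i$ is injective once the $b_i$ are linearly independent, so that no accidental cancellation among the $\Delta^{-l}$ coefficients can lower the rank; one verifies this by expanding $\Delta^{-1}\circ b_i$ in the pseudo-difference algebra and invoking the independence of the $b_i$. The discrete Jacobi--Sylvester identity is routine but must be checked directly in the $\Delta$-calculus rather than borrowed from the differential case.
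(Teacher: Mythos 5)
Your proposal is correct in substance, and its skeleton coincides with the paper's: both arguments rest on the same key computation, obtained from Theorem \ref{theor1} and the identities (\ref{operatork}),
\begin{equation*}
(L^k)_-=\sum_{i=1}^{m}T_m(\Delta^k f_i)\circ\Delta^{-1}\circ b_i,
\qquad
T_m(\Delta^k f_i)=\frac{W_{m+1}^{\Delta}(f_1,\dots,f_m,\Delta^k f_i)}{W_m^{\Delta}(f_1,\dots,f_m)},
\end{equation*}
which is the paper's (\ref{lkconstrian}); both settle sufficiency by expressing the functions $h_i=T_m(\Delta^k f_i)$ through $M$ spanning functions $q_j$ and regrouping the sum; and both pass between (\ref{wronski1}) and (\ref{wronski2}) via the Wronskian-of-Wronskians identity. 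Where you genuinely diverge is the necessity direction. The paper does not use your rank/injectivity lemma at all: it invokes the $M$-th order difference operator $A$ from the preamble of Section \ref{section3}, whose kernel is spanned by $q_1,\dots,q_M$ and for which $AL^k$ is a difference operator; applying $AL^k$ to $T_m(f_j)=0$ and using the intertwining $L^k T_m=T_m\Delta^k$ together with (\ref{linearf}) gives $A(h_j)=0$, so every $h_j$ lies in the $M$-dimensional kernel of $A$ and (\ref{wronski2}) follows. Your route instead shows that no cancellation can shorten a representation $\sum_i h_i\circ\Delta^{-1}\circ b_i$ once the $b_i$ are linearly independent. Both mechanisms work, and yours even yields the sharper statement that the minimal number of pairs equals $\dim\operatorname{span}\{h_i\}$; but the paper's argument buys immunity from the $b_i$: it never needs their independence, whereas your proof hinges on it and you only assert it. That independence does not follow formally from invertibility of $T_m$ (dependence of the $b_i$ is compatible with $T_m T_m^{-1}=1$ at the level of the identities (\ref{operatork})); it is a genericity statement, provable e.g.\ from the cofactor expansion showing $\sum_i c_i b_i=0$ forces a bordered Casoratian with constant last row $(c_1,\dots,c_m)$ to vanish identically, which $W_m^{\Delta}\neq 0$ excludes for generic $f_i$. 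This is a fixable gap of the same nature as the generic facts both you and the paper use silently (vanishing $\Delta$-Wronskians imply linear dependence). One further slip, harmless for the argument: in your discrete Jacobi--Sylvester identity the normalizing factor should be $W_m^{\Delta}$ to the first power, not $\bigl(W_m^{\Delta}\bigr)^{M}$ (test $m=1$, $M=2$ with exponential functions); since $W_m^{\Delta}\neq0$, the equivalence of (\ref{wronski1}) and (\ref{wronski2}) is unaffected either way.
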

\begin{proof}
Similar to case of KP hierarchy \cite{JMP1996_Oevel} and $q$-KP hierarchy \cite{sigma06-060}, there has the following Wronskian identity
\begin{eqnarray}
W_{m+M+1}^{\Delta}(\frac{W_{m+1}^{\Delta}(f_1,\dots,f_m,\Delta^k f_{i_1})}{W_m^{\Delta}(f_1,\dots,f_m)},&&\frac{W_{m+1}^{\Delta}(f_1,\dots,f_m,\Delta^k f_{i_2})}{W_m^{\Delta}(f_1,\dots,f_m)},\cdots,
\frac{W_{m+1}^{\Delta}(f_1,\dots,f_m,\Delta^k f_{i_{M+1}})}{W_m^{\Delta}(f_1,\dots,f_m)})\nonumber\\
&&=W_m(f_1,\cdots,f_m,\Delta^k f_{i_1},\cdots,\Delta^k f_{i_{M+1}})
\end{eqnarray}
which provides  the equivalence between (\ref{wronski1}) and (\ref{wronski2}).

Using Theorem \ref{theor1} and the relation of operator identities in (\ref{operatork}), one finds
\begin{equation}\label{lkconstrian}
(L^k)_-=(W \circ \Delta^k \circ W^{-1})_-= \sum_{j=1}^{m}W (\Delta^k  f_j )\Delta^{-1}g_j
\end{equation}
 As it was pointed out in the beginning of this section,  there exists an $m$-th order differential operator $A$ such that $AL^k$ is a difference operator. Application to $W(f_j)=0$ yields
 \begin{equation}
0=AL^k(W(f_j))=AW\circ\Delta^k (f_j)= AW (\partial_{t_k}(f_j)).
\end{equation}
So,
 \begin{equation}
W(\partial_{t_k}(f_j))=\frac{W_m^{\Delta}(f_1,\cdots, f_m,\Delta^k f_j)}{W_m^{\Delta}(f_1,\cdots, f_m)}\in Ker (A).
\end{equation}
Since the kernel of $A$ has dimension $m$, at most $m$ of these functions $\Delta^k f_j$ can be linearly independent. So, (\ref{wronski2}) is deduced.

Conversely, if (\ref{wronski2}) holds, then  there exists one $M$-component of cdKP
$(M < m)$ constrained from (\ref{lkconstrian}). The equation (\ref{wronski2}) implies that at most $M$ of functions $W(\Delta^k (f_j))$
are linearly independent, here $f_j$ satisfy (\ref{linearf}). Then we can find suitable $M$ functions
${q_1, q_2, \dots , q_M}$, which are linearly independent, to express functions $W(\Delta^k (f_j))$  as
 \begin{equation}
W(\partial_{t_k}(f_j))=\frac{W_m^{\Delta}(f_1,\cdots, f_m,\Delta^k f_j)}{W_m^{\Delta}(f_1,\cdots, f_m)}=\sum_{i=1}^{M}c_{ij}q_i, j=1,\dots, m.
\end{equation}
with some constant $c_{ij}$. Taking it back into the (\ref{lkconstrian}), it becomes
 \begin{equation}
(L^k)_-= \sum_{j=1}^{m}(\sum_{i=1}^{M}c_{ij}q_i)\circ\Delta^{-1}\circ g_j=\sum_{i=1}^{M}q_i\circ\Delta^{-1}\circ(\sum_{j=1}^{m}c_{ij}g_j)=\sum_{i=1}^{M}q_i\circ\Delta^{-1}\circ r_i,
\end{equation} then a $m$-component cdKP hierarchy is reduced to a $M$-component cdKP hierarchy.
\end{proof}

\textbf{Remark:} This theorem is a difference version of the corresponding theorem of the Ref. \cite{JMP1996_Oevel}.

The Wronskian solution of the cdKP hierarchy can be got by the Theorem \ref{mainthe} under the gauge transformation. If the initial Lax operator of the constrained discrete KP hierarchy is a "free" operator $\Delta$, then
$L= \Delta$ means that the initial $\tau$ function $\tau_\Delta$ is $1$.

\section{Example of reducing dKP hierarchy to cdKP hierarchy}\label{example}

In this section, we  use the method in Theorem \ref{mainthe} to find the solution of the  multi-component cdKP hierarchy.
We discuss the cdKP hierarchy generated by $T_i\mid_{i=2}$, possesses a $\tau$ function
\begin{equation}
\tau_{\Delta}^{(2)}=W_2^{\Delta}(f_1,f_2)=
\left|
\begin{array}{cc}
f_1 &f_2 \\
 \Delta f_1 & \Delta f_2
\end{array}
   \right|=f_1\circ\Delta f_2- f_2\circ\Delta f_1,
\end{equation}
with
\begin{equation}
f_1=f_{11}(z_1,n,t)+f_{12}(z,n,t), f_2=f_{21}(z_2,n,t)+f_{22}(z_3,n,t).
\end{equation}
Here
\begin{eqnarray*}
f_{11}(z_1,n,t)=(1+z_1)^ne^{\xi_1},f_{12}(z,n,t)=(1+z)^ne^{\xi},\\
 f_{21}(z_2,n,t)=(1+z_2)^ne^{\xi_2},f_{22}(z_3,n,t)=(1+z_3)^ne^{\xi_3},
\end{eqnarray*}
where $\xi_i=c_i+z_it_1+z_i^2t_2+z_i^3t_3,i=1,2,3$ and $\xi=d+zt_1+z^2t_2+z^3t_3,$ $c_i$ and $d$ are arbitrary constants.
These functions $f_1$ and $f_2$ satisfy  linear equations (\ref{linearf})  for $k=1,2,3$.
By (\ref{linearf}), the cdKP hierarchy generated by $T_2$ is in the form of
\begin{eqnarray}
L^l&=&(L^l)_++(T_2(\Delta^kf_1))\circ\Delta^{-1}\circ g_1+(T_2(\Delta^kf_2))\circ\Delta^{-1}\circ g_2\label{t2constrain}\\
&\stackrel{constraint}{====}& (L^l)_++q_1\circ\Delta^{-1}\circ r_1,\label{t2constrain2}
\end{eqnarray}
where $q_1,r_1$ are undetermined, which can be expressed by $f_1$ and $f_2$ as follows.
$\tau_{\Delta}^{(2)}$ possesses a form as
\begin{eqnarray}\label{t2}
\tau_{\Delta}^{(2)}&=&W_2^{\Delta}(f_1,f_2)\nonumber\\
&=&(z_2-z_1)(1+z_1)^n(1+z_2)^ne^{\xi_1+\xi_2}+(z_2-z)(1+z)^n(1+z_2)^ne^{\xi+\xi_2}\nonumber\\
&&+(z_3-z_1)(1+z_1)^n(1+z_3)^ne^{\xi_1+\xi_3}+(z_3-z)(1+z)^n(1+z_3)^ne^{\xi+\xi_3}.
\end{eqnarray}

According to (\ref{wronski1}) in Theorem \ref{mainthe}, the restriction for $f_1$ and $f_2$ to reduce (\ref{t2constrain}) to (\ref{t2constrain2}) is given by
\begin{eqnarray}\label{w2}
0&=&W_4^{\Delta}(f_1,f_2,f_1^{(k)},f_2^{(k)})\nonumber\\
&=&(z^k-z_1^k)(z_2^k-z_3^k)V(z_1,z_2,z_3,z)F(n;z_i,t),
\end{eqnarray}
with Vandermonde determinant
\begin{equation}
V(z_1,z_2,z_3,z)=\left|
\begin{array}{cccc}
1&1&1&1\\
z_1&z_2&z_3&z\\
z_1^2&z_2^2&z_3^2&z^2\\
z_1^3&z_2^3&z_3^3&z^3
\end{array}
   \right|,
\end{equation}
and
\begin{equation}
F(n;z_i,t)=(1+z_1)^n(1+z_2)^n(1+z_3)^n(1+z)^ne^{\xi_1+\xi_2+\xi_3+\xi}.
\end{equation}

Obviously, $f_1$ and $f_2$  satisfy  (\ref{w2}) by setting $z=z_2$ and $d=c_2$. Then the $\tau$ function of a single component $k$-constrained cdKP hierarchy defined by
\begin{eqnarray}
\tau_{cdKP}^{\Delta}&=&(z_2-z_1)(1+z_1)^n(1+z_2)^ne^{\xi_1+\xi_2}+(z_3-z_1)(1+z_1)^n(1+z_3)^ne^{\xi_1+\xi_3} \nonumber\\
&&+(z_3-z_2)(1+z_2)^n(1+z_3)^ne^{\xi_2+\xi_3},
\end{eqnarray}
which is deduced by (\ref{t2}) with $\xi_2=\xi$. It means that we indeed reduced the $\tau$ function $\tau_{\Delta}^{(2)}$ in (\ref{t2}) of the dKP hierarchy to the $\tau$ function $\tau_{cdKP}^{\Delta}$ of the $1$-component cdKP hierarchy.

  We would like to get the explicit forms  of $q_1$ and $r_1$ of cdKP hierarchy in (\ref{t2constrain2}).
With the determinant representation of $T_2$ and $T_2^{-1}$, one can have
\begin{subequations}\label{f1f2}
\begin{align}
f_1^{\Delta}&\triangleq T_2(\Delta^kf_1)=\frac{(z_1^k-z_2^k)V(z_1,z_2,z_3)(1+z_1)^n(1+z_2)^n(1+z_3)^ne^{\xi_1+\xi_2+\xi_3}}{\tau_{cdKP}^{\Delta}}&\label{f1f2:1A}\\
f_2^{\Delta}&\triangleq T_2(\Delta^k f_2)=\frac{(z_3^k-z_2^k)V(z_1,z_2,z_3)(1+z_1)^n(1+z_2)^n(1+z_3)^ne^{\xi_1+\xi_2+\xi_3}}{\tau_{cdKP}^{\Delta}}\label{f1f2:1B}\\
g_1^{\Delta}&\triangleq (T_2^{-1})^*(\Delta^k g_1)=-\Lambda(\frac{f_2}{W_2^{\Delta}(f_1,f_2)})=-\Lambda(\frac{(1+z_2)^ne^{\xi_2}+(1+z_3)^ne^{\xi_3}}{\tau_{cdKP}^{\Delta}})\label{f1f2:1C}\\
g_2^{\Delta}&\triangleq (T_2^{-1})^*(\Delta^k g_1)=\Lambda(\frac{f_1}{W_2^{\Delta}(f_1,f_2)})=\Lambda(\frac{(1+z_1)^ne^{\xi_1}+(1+z_2)^ne^{\xi_2}}{\tau_{cdKP}^{\Delta}})\label{f1f2:1D}
\end{align}
\end{subequations}
with the Vandermonde determinant
\begin{equation}
V(z_1,z_2,z_3)=\left|
\begin{array}{ccc}
1&1&1\\
z_1&z_2&z_3\\
z_1^2&z_2^2&z_3^2
\end{array}
   \right|.
\end{equation}
It is clearly
\begin{equation*}
(z_3^k-z_2^k)f_1^{\Delta}=(z_1^k-z_2^k)f_2^{\Delta}.
\end{equation*}

So the $q_1$ in (\ref{t2constrain2}) is
\begin{eqnarray}
\mbox{\hspace{-0.3cm}}&q_1\mbox{\hspace{-0.3cm}}&\triangleq(z_3^k-z_2^k)f_1^{\Delta}=(z_1^k-z_2^k)f_2^{\Delta}\nonumber\\
\mbox{\hspace{-0.3cm}}&=&\mbox{\hspace{-0.3cm}}\frac{(z_3^k-z_2^k)(z_1^k-z_2^k)(z_2-z_1)(z_3-z_1)(z_3-z_2)(1+z_1)^n(1+z_2)^n(1+z_3)^ne^{\xi_1+\xi_2+\xi_3}}{\tau_{cdKP}^{\Delta}}.
\end{eqnarray}
And the (\ref{t2constrain}) is reduced to
\begin{eqnarray}
(L^l)_-&=&f_1^{\Delta}\circ\Delta^{-1}\circ g_1^{\Delta}+f_2^{\Delta}\circ\Delta^{-1}\circ g_2^{\Delta}\nonumber\\
&=& (z_3^k-z_2^k)f_1^{\Delta}\circ\Delta^{-1}\circ \frac{g_1^{\Delta}}{(z_3^k-z_2^k)}+(z_1^k-z_2^k)f_2^{\Delta}\circ\Delta^{-1}\circ \frac{g_2^{\Delta}}{z_1^k-z_2^k}\nonumber\\
&=& q_1\circ\Delta^{-1}\circ r_1,
\end{eqnarray}
where
\begin{eqnarray}
r_1&\triangleq&\frac{g_1^{\Delta}}{(z_3^k-z_2^k)}+\frac{g_2^{\Delta}}{z_1^k-z_2^k}\nonumber\\
&=&\Lambda(\frac{(z_3^k-z_2^k)(1+z_1)^ne^{\xi_1}+(z_3^k-z_1^k)(1+z_2)^ne^{\xi_2}+(z_2^k-z_1^k)(1+z_3)^ne^{\xi_3}}{(z_3^k-z_2^k)(z_1^k-z_2^k)\tau_{cdKP}^{\Delta}}).
\end{eqnarray}

For simplicity, denote $t_1 = x; t_2 = y$.
In particular, choosing $z_1=z$, $z_2=0$, $z_3=-z$, $c_1=c$, $c_2=0$, $c_3=-c$ then $\xi_2=0$, $\xi_3=-\xi_1$ and
\begin{equation}\label{q1}
q_1=\frac{(-1)^{k+2}z^{2k+2}(1+z)^n(1-z)^ne^{2z^2y}}{(1+z)^n(1-z)^n+\frac{(1+z)^ne^{\eta}+(1-z)^ne^{-\eta}}{2}e^{z^2y}}.
\end{equation}
Base on above choice,
\begin{equation}\label{r1k1}
r_1=
-\frac{1}{z^{k+1}}\Lambda(\frac{\frac{[(1+z)^ne^{\eta}+(1-z)^ne^{-\eta}]}{2}+e^{-z^2y}}{\frac{[(1+z)^ne^{\eta}+(1-z)^ne^{-\eta}]}{2}+(1-z^2)^ne^{z^2y}})
\end{equation}and if $k$ is odd, and
\begin{equation}\label{r1k2}
r_1=
-\frac{1}{z^{k+1}}\Lambda(\frac{\frac{(1+z)^ne^{\eta}-(1-z)^ne^{-\eta}}{2}}{\frac{(1+z)^ne^{\eta}+(1-z)^ne^{-\eta}}{2}+(1-z^2)^ne^{z^2y}})
\end{equation}
if $k$ is even. Here $\eta \triangleq c+zx+z^3t_3.$

So  the dynamical variable $u_1=q_1\Lambda^{-1}(r_1)$ of the Lax operator $L$ of the cdKP hierarchy
\begin{equation}\label{u1}
u_1=2z^{k+1}\frac{(1-(-1)^k)+e^{-(c+zx+z^3t_3)+z^2y}(1-z)^{n}
-e^{(c+zx+z^3t_3)+z^2y}(-1)^k(1+z)^{n}}{(1-z^2)^n(\frac{e^{c+zx+z^3t_3}}{(1-z)^n}+2e^{z^2y}+\frac{e^{-(c+zx+z^3t_3)}}{(1+z)^n})^2}.
\end{equation}
An example is
\begin{equation}
u_1=2 z^{k+1}\frac{(1-(-1)^k)+e^{-zx+z^2y}(1-z)^{n}
-e^{zx+z^2y}(-1)^k(1+z)^{n}}{(1-z^2)^n(\frac{e^{zx}}{(1-z)^n}+2e^{z^2y}+\frac{e^{-zx}}{(1+z)^n})^2},
\end{equation} by  setting $t_3=0, c=0.$
For this case
\begin{eqnarray}
u_1=
2 z^{k+1}\frac{2+e^{-zx+z^2y}(1-z)^{n}
+e^{zx+z^2y}(1+z)^{n}}{(1-z^2)^n(\frac{e^{zx}}{(1-z)^n}+2e^{z^2y}+\frac{e^{-zx}}{(1+z)^n})^2},\text{$k$ is odd},\label{u1k1}\\
u_1=2 z^{k+1}\frac{e^{-zx+z^2y}(1-z)^{n}
-e^{zx+z^2y}(1+z)^{n}}{(1-z^2)^n(\frac{e^{zx}}{(1-z)^n}+2e^{z^2y}+\frac{e^{-zx}}{(1+z)^n})^2},\text{$k$ is even.}\label{u1k2}
\end{eqnarray}

\textbf{Remark:} Actually, (\ref{w2}) also can be satisfied by other two choices $z=z_1$ or $z=z_3$. But $u_1$ (\ref{u1}) will be only one-soliton solution  because the $f_1^{\Delta}=0$ in (\ref{f1f2:1A}) or $f_2^{\Delta}=0$ in (\ref{f1f2:1B}) separately.

The graph of $q_1=q_1(x,y,n),r_1=r_1(x,y,n),u_1=u_1(x,y,n)$ were plotted in below for fixed $k=1,2$.
We shall discuss the function of the gauge transformation for the the cdKP hierarchy to emphasize  two sides about the discrete variable $n$ of it and the times variable $k$ of the gauge transformation of it.
 The profile of $q_1,r_1,u_1$ are plotted according to the value of discrete variable $n$ from $0$ to $2$ and the value of time $k$ of the gauge transformations  from $1$ to $2$.
The five conditions of the profile of $q_1,r_1,u_1$ are  $\{n=0,k=1\}$,$\{n=1,k=1\}$, $\{n=2,k=1\}$, $\{n=0,k=2\}$, $\{n=1,k=2\}$ and $\{n=2,k=2\}$ as following.

(1).The profile of $q_1$  are plotted with $k=1,n=0,1,2$ in Figure \ref{q1n0k1}, Figure \ref{q2n1k1} and Figure \ref{q2n2k1} respectively.

(2).The profile of $r_1$ are plotted with $k=1,n=0,1,2$ in Figure \ref{r0n0k1}, Figure \ref{r2n1k1} and Figure \ref{r2n2k1} respectively.

(3).The  Y-type soliton profile  of $u_1$ are plotted with $k=1,n=0,1,2$ in Figure \ref{u2n0k1} , Figure \ref{u2n1k1} and Figure \ref{u2n2k1} respectively.

(4).The bright-dark soliton profile of $u_1$ are plotted with $k=2$ and $n=0,1,2$  in Figure \ref{u2n0k2}, Figure \ref{u2n1k2} and  Figure \ref{u2n2k2}  respectively.

 From the graphs of the solution of cdKP hierarchy, it can be found that:

(1) The profile of the solution $q_1$ of the cdKP hierarchy is decreasing to the one of the  classical KP hierarchy in Ref. \cite{JMP1996_Oevel}  when $n\to 0$  see Figure \ref{q1n0k1} ($k=1$). For $r_1,u_1$ of the cdKP hierarchy, the profile of its are also decreasing the analogues of the classical KP hierarchy (see  Figure \ref{r0n0k1} and Figure \ref{u2n0k1}).

(2)When  the times $k$ of gauge transformation is an odd number,  the profiles of $u_1$ become the Y-type soliton, see Figure \ref{u2n0k1}, Figure \ref{u2n1k1} and Figure \ref{u2n2k1}.

(3)When the times $k$  of gauge transformation is an even number,  the profiles of $u_1$ become bright-dark soliton, see Figure \ref{u2n0k2}, Figure \ref{u2n1k2} and Figure \ref{u2n2k2}.

For the end of showing more detail about dependence of $u_1,q_1$ on $n$, it is necessary to define $n$-effect quantity $\Delta u_1(z,x,y,n)=u_1(z,x,y,n)-u_1(z,x,y,n=0)=u_1(n)-u_1(0)$ for fixed $z=0.5$. Figure \ref{udd} are plotted for the $\Delta u_1(z,x,y,n)$ where $n=1,2,3$ respectively, which shows  the dependence of $u_1$ on $n$. It was obviously they are decreasing to almost zero when $n$  goes from $3$ to $1$ with fixed $z=0.5$. They also demonstrate that discretization of the cdKP hierarchy keeps the profile of the soliton though it has  discrete variable $n$. These figures give us again an opportunity to observe the role of discrete variable $n$  in the Wronskian solution of the cdKP hierarchy.

\section{Conclusions}\label{conclusion}
In this paper, the Wronskian solutions of the equation in the cdKP hierarchy have been given by means of the multi-channel  gauge transformation. Based on the results  of our previous papers \cite{lmh20131,lmh2}, Theorem \ref{mainthe} provides a necessary  and sufficient condition of the $k$-constrained discrete KP hierarchy with $m$ components. As an example, the reduction  from $2$-cdKP hierarchy to $1$-cdKP hierarchy is presented. It can be found that the profiles of solution $u_1$ of cdKP hierarchy can be the Y-type solition by the odd number times gauge transformation, but the solution of cdKP hierarchy  becomes bright-dark solition by even times gauge transformation. From these profiles, it can be find that the solution $u_1$ of the cdKP hierarchy is decreasing to the analogues of the  classical KP hierarchy  when $n\to 0$.

{\bf Acknowledgments} {\noindent
This work is supported by
the National Natural Science Foundation of China  under Grant Nos.11271210 and 11201251, K.C.Wong Magna Fund in
Ningbo University, the Natural Science Foundation of Zhejiang Province under Grant No. LY12A01007 and Science Fund of Ningbo
University (No.XYL14028). One of the authors (MH) is  supported by Erasmus Mundus Action 2 EXPERTS III
 and would like to thank Prof. Antoine Van Proeyen for many helps.}

\vskip20pt

\begin{figure}[htb]
\centering
\raisebox{0.8in}{}\includegraphics[scale=0.8]{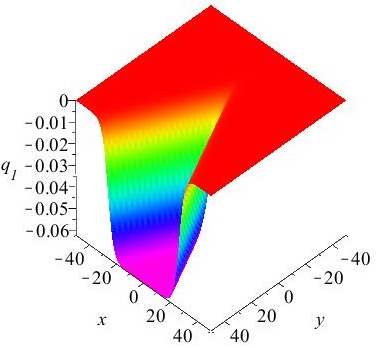}
\hskip 0.03cm
\raisebox{0.8in}{}\includegraphics[scale=0.4]{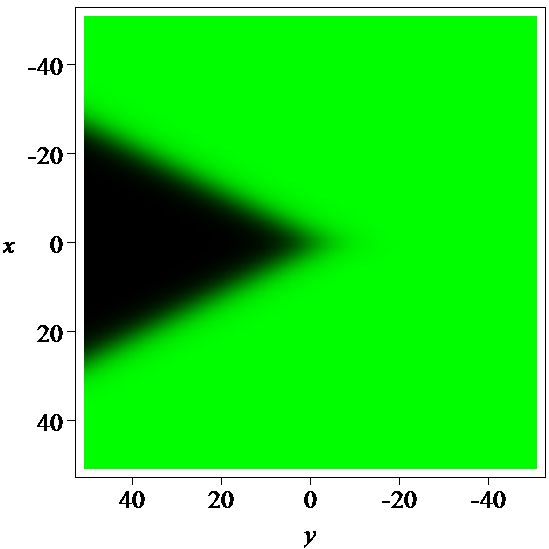}
\caption{
The profile of the solution $q_1$(\ref{q1}) of equation (\ref{xuedinger}) (left) and its  density plot (right)with $ c=0, t_3=0, z=0.5, k=1$ and $n=0$ .}\label{q1n0k1}
\end{figure}

\begin{figure}[htb]
\setlength{\unitlength}{0.1cm}
\raisebox{0.85in}{}\includegraphics[scale=0.6]{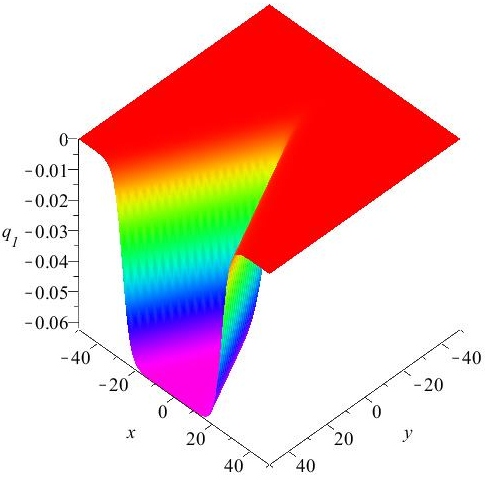}
\setlength{\unitlength}{0.1cm}
\raisebox{0.85in}{}\includegraphics[scale=0.4]{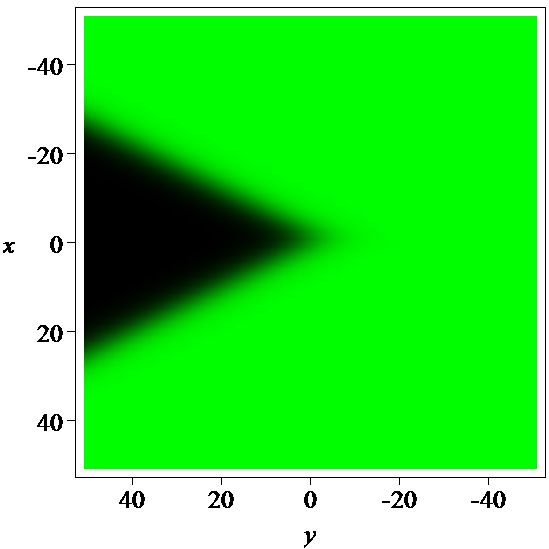}
\caption{The profile of the solution of $q_1$(\ref{q1}) of equation (\ref{xuedinger}) (left) and its  density plot (right) with parameters $ c=0, t_3=0, z=0.5, k=1$ and $n=1$.}\label{q2n1k1}
\end{figure}

\begin{figure}[htb]
\setlength{\unitlength}{0.1cm}
\raisebox{0.85in}{}\includegraphics[scale=0.6]{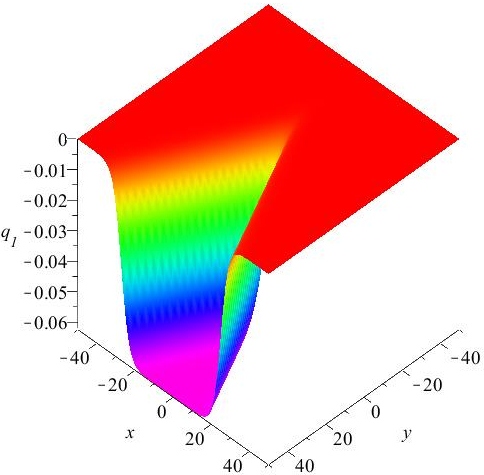}
\setlength{\unitlength}{0.1cm}
\raisebox{0.85in}{}\includegraphics[scale=0.4]{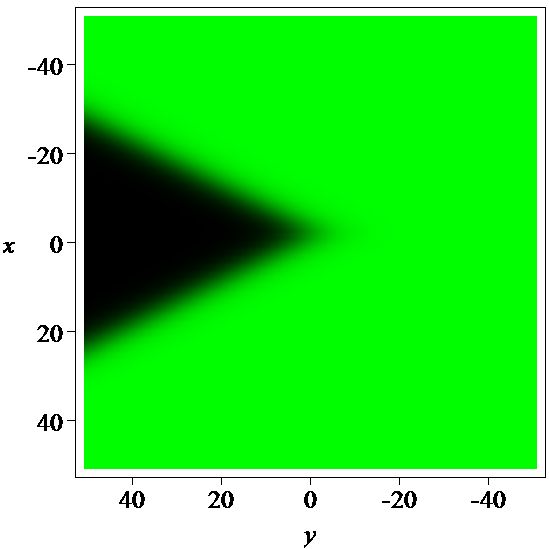}
\caption{The profile of the solution of $q_1$(\ref{q1}) of equation (\ref{xuedinger}) (left) and its  density plot (right) with parameters $ c=0, t_3=0, z=0.5, k=1$ and $n=2$.}\label{q2n2k1}
\end{figure}

\begin{figure}[htb]
\centering
\raisebox{0.8in}{}\includegraphics[scale=0.65]{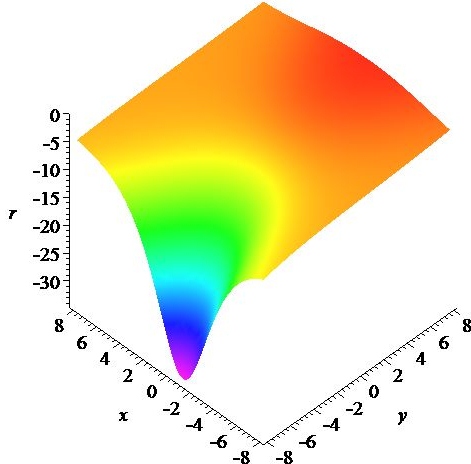}
\hskip 0.03cm
\raisebox{0.8in}{}\includegraphics[scale=0.4]{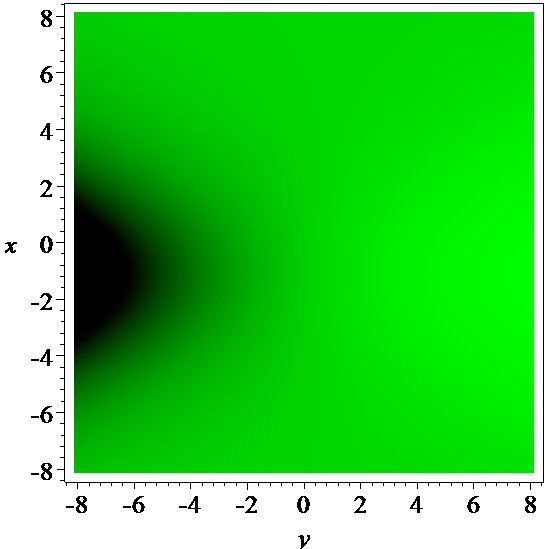}
\caption{
The profile of the solution  $ r_1$ (\ref{r1k1}) (left)  of equation (\ref{xuedinger}) and its  density plot (right) with parameters $ c=0, t_3=0, z=0.5, k=1$ and $n=0$. The vertical axis $r$ denotes the $r_1$.}\label{r0n0k1}
\end{figure}

\begin{figure}[htb]
\setlength{\unitlength}{0.1cm}
\raisebox{0.85in}{}\includegraphics[scale=0.62]{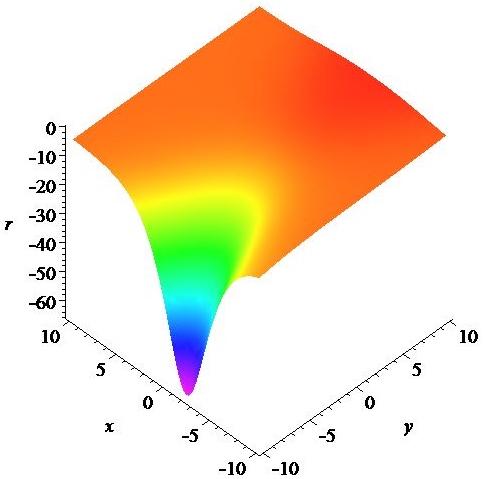}
\setlength{\unitlength}{0.1cm}
\raisebox{0.85in}{}\includegraphics[scale=0.4]{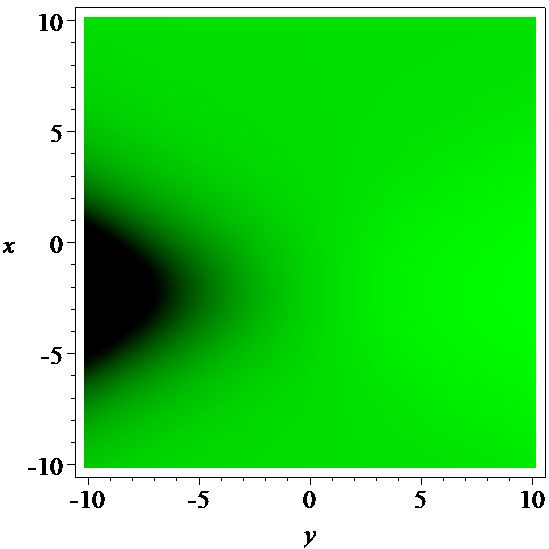}
\caption{The profile of $r_1$ (\ref{r1k1}) (left) and its  density plot (right) with parameters $ c=0, t_3=0, z=0.5, k=1 $ and $n=1$.
The vertical axis $r$ denotes the $r_1$.}\label{r2n1k1}
\end{figure}

\begin{figure}[htb]
\setlength{\unitlength}{0.1cm}
\raisebox{0.85in}{}\includegraphics[scale=0.62]{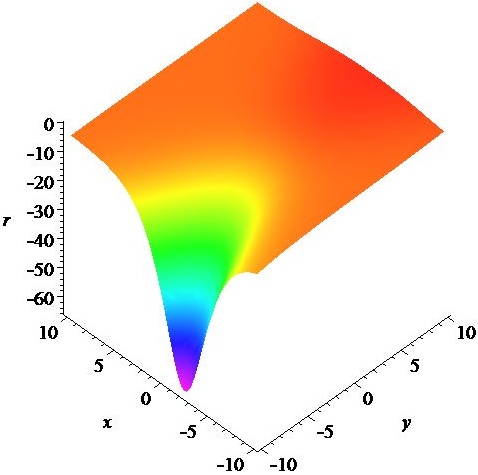}
\setlength{\unitlength}{0.1cm}
\raisebox{0.85in}{}\includegraphics[scale=0.4]{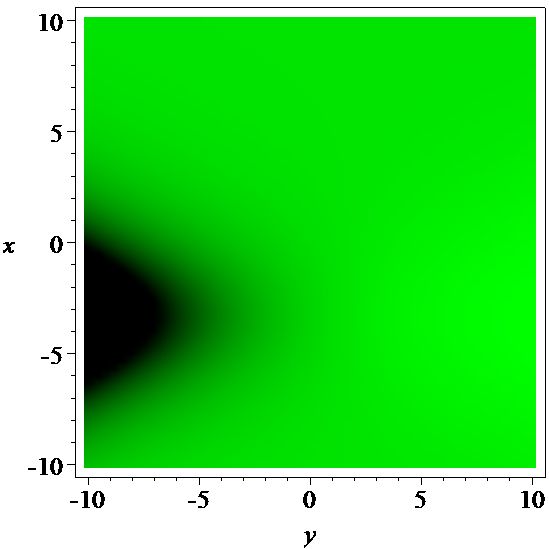}
\caption{The profile of $r_1$ (\ref{r1k1}) (left) and its  density plot (right) with parameters $ c=0, t_3=0, z=0.5, k=1 $ and $n=2$.
The vertical axis $r$ denotes the $r_1$. }\label{r2n2k1}
\end{figure}

\begin{figure}[htb]
\centering
\raisebox{0.8in}{}\includegraphics[scale=0.67]{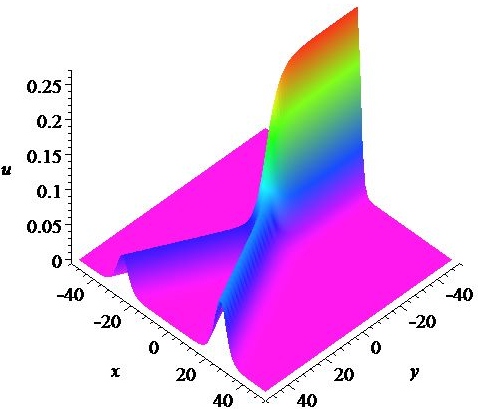}
\hskip 0.03cm
\raisebox{0.8in}{}\includegraphics[scale=0.4]{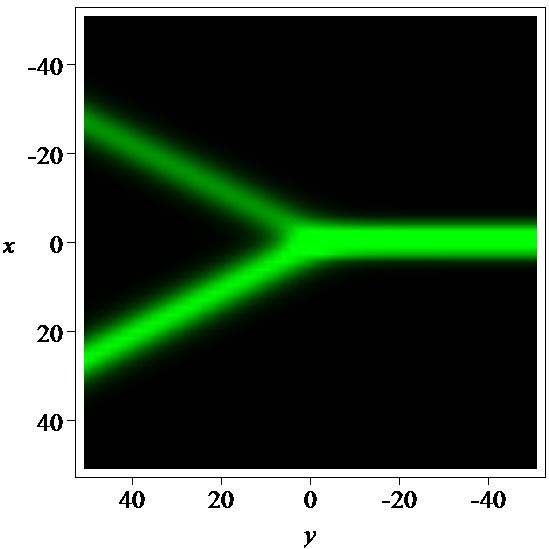}
\caption{The profile of type Y soliton of $u_1$ (\ref{u1k1}) (left) and its  density plot (right) with parameters $ c=0, t_3=0, z=0.5, k=1$ and  $n=0$. The vertical axis $u$ denotes the $u_1$.}\label{u2n0k1}
\end{figure}

\begin{figure}[htb]
\setlength{\unitlength}{0.1cm}
\raisebox{0.80in}{}\includegraphics[scale=0.70]{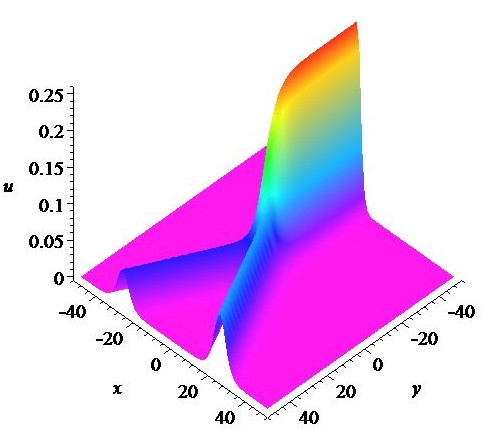}
\setlength{\unitlength}{0.1cm}
\raisebox{0.80in}{}\includegraphics[scale=0.4]{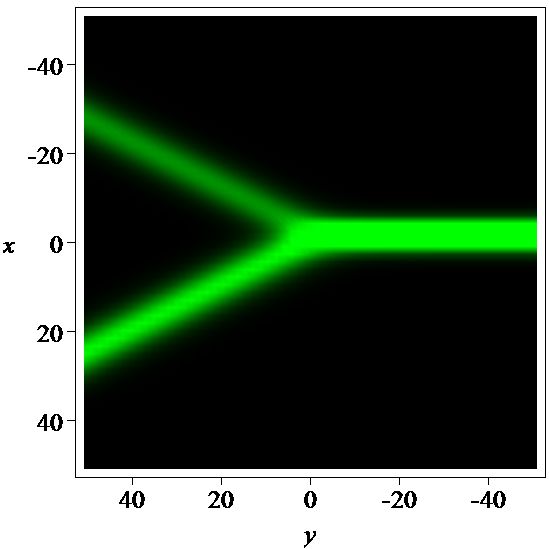}
\caption{The profile of type Y soliton of $u_1$ (\ref{u1k1}) (left) and its  density plot (right) with parameters $ c=0, t_3=0, z=0.5, k=1$  and $n=1$.
The vertical axis $u$ denotes the $u_1$.}\label{u2n1k1}
\end{figure}

\begin{figure}[htb]
\setlength{\unitlength}{0.1cm}
\raisebox{0.85in}{}\includegraphics[scale=0.45]{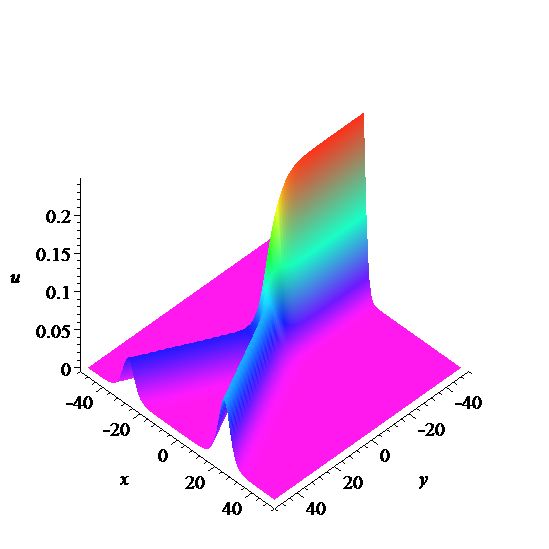}
\raisebox{0.85in}{}\includegraphics[scale=0.40]{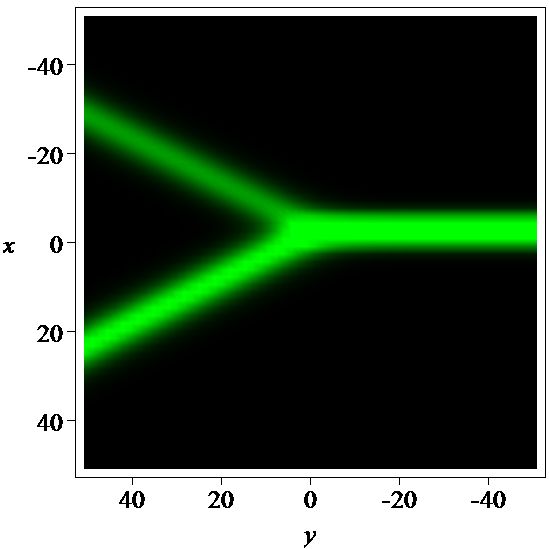}
\caption{The profile of type Y soliton of $u_1$ (\ref{u1k1}) (left) and its  density plot (right) with parameters $ c=0, t_3=0, z=0.5, k=1$  and $n=2$. The vertical axis $u$ denotes the $u_1$.}\label{u2n2k1}
\end{figure}

\begin{figure}[htb]
\setlength{\unitlength}{0.1cm}
\raisebox{0.85in}{}\includegraphics[scale=0.68]{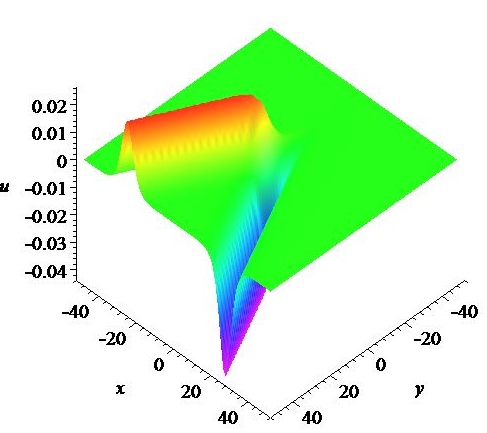}
\setlength{\unitlength}{0.1cm}
\raisebox{0.4in}{}\includegraphics[scale=0.4]{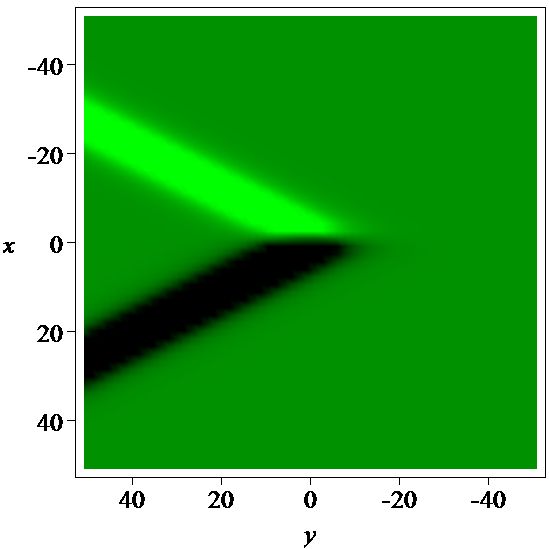}
\caption{The profile of bright-dark type soliton of $u_1$ (\ref{u1k2}) (left) and its  density plot (right) with parameters $c=0, t_3=0, z=0.5, k=2$  and $n=0$ . The vertical axis $u$ denotes the $u_1$.}\label{u2n0k2}
\end{figure}

\begin{figure}[htb]
\setlength{\unitlength}{0.1cm}
\raisebox{0.85in}{}\includegraphics[scale=0.68]{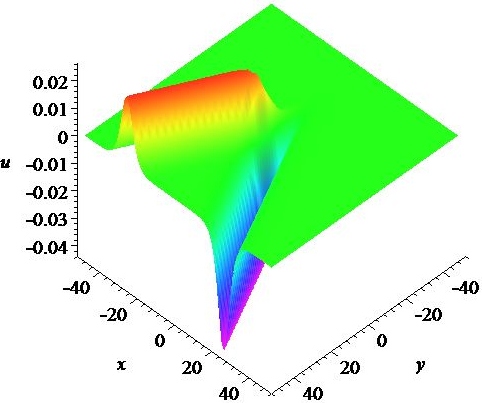}
\setlength{\unitlength}{0.1cm}
\raisebox{0.4in}{}\includegraphics[scale=0.4]{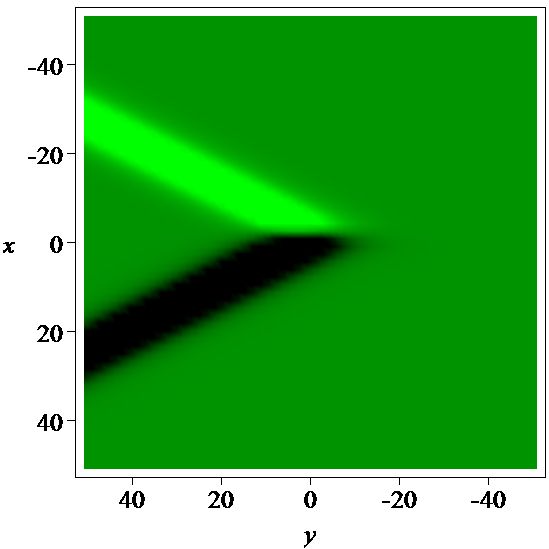}
\caption{The profile of bright-dark type soliton of $u_1$ (\ref{u1k2}) (left) and its  density plot (right) with parameters $c=0, t_3=0, z=0.5, k=2$  and $n=1$ . The vertical axis $u$ denotes the $u_1$.}\label{u2n1k2}
\end{figure}

\begin{figure}[htb]
\setlength{\unitlength}{0.1cm}
\raisebox{0.85in}{}\includegraphics[scale=0.70]{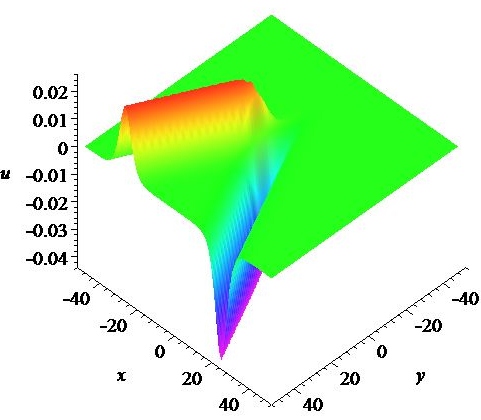}
\setlength{\unitlength}{0.1cm}
\raisebox{0.85in}{}\includegraphics[scale=0.4]{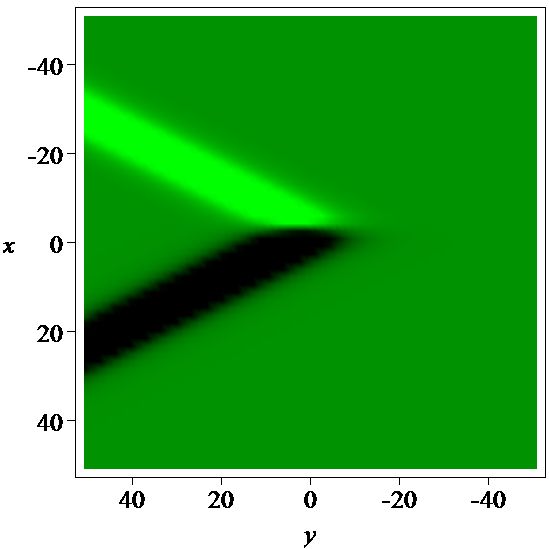}
\caption{
The profile of bright-dark soliton of $u_1$ (\ref{u1k2}) (left) and its  density plot (right) with parameters $c=0, t_3=0, z=0.5, k=2$  and $n=2$. The vertical axis $u$ denotes the $u_1$.}\label{u2n2k2}
\end{figure}

\begin{figure}[htb]
\centering
\raisebox{0.8in}{(a)}\includegraphics[scale=0.6]{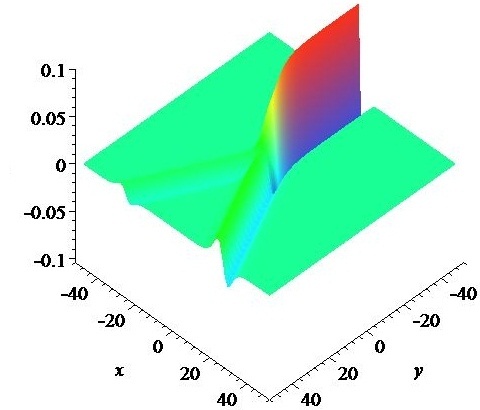}
\hskip 0.03cm
\raisebox{0.8in}{(b)}\includegraphics[scale=0.6]{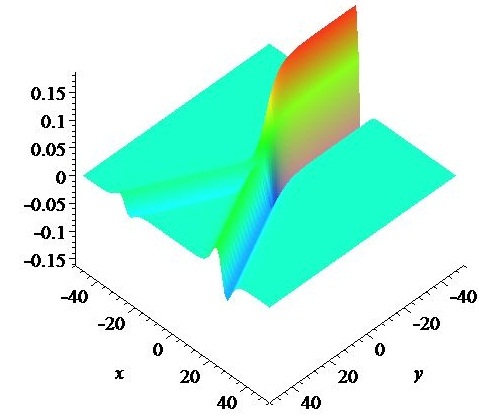}
\hskip 0.03cm
\raisebox{0.8in}{(c)}\includegraphics[scale=0.65]{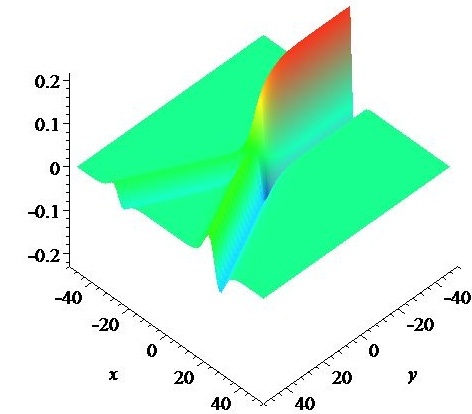}
\caption{
 $\Delta u_1(z,x,y,n)$ with $ c=0, t_3=0, z=0.5, k=1$ and $n=1$ in (a), $2$ in (b) and $3$ in (c).}\label{udd}
\end{figure}

\end{document}